\newtheorem{prop}{Proposition}
\let\l@ENGLISH\l@english
\title{From Polar to Reed-Muller Codes: a Technique to Improve the Finite-Length Performance}
\author{Marco Mondelli, S. Hamed Hassani, and R\"{u}diger Urbanke
\thanks{M. Mondelli and R. Urbanke are with the School of Computer and Communication Sciences,
EPFL, CH-1015 Lausanne, Switzerland
(e-mail: \{marco.mondelli, ruediger.urbanke\}@epfl.ch).

S. H. Hassani is with the Computer Science Department, ETH Z\"{u}rich, Switzerland
(e-mail: hamed@inf.ethz.ch).

This paper was presented in part at the IEEE International Symposium on Information Theory (ISIT), Honolulu, Hawaii, USA, July 2014.}}
\begin{document}

\maketitle
\begin{abstract}
\noindent We explore the relationship between polar and RM codes and we describe a coding scheme which improves upon the performance of the standard polar code at practical block lengths. Our starting point is the experimental observation that RM codes have a smaller error probability than polar codes under MAP decoding. This motivates us to introduce a family of codes that ``interpolates'' between RM and polar codes, call this family ${\mathcal C}_{\rm inter} = \{C_{\alpha} : \alpha \in [0, 1]\}$, where $C_{\alpha} \big |_{\alpha = 1}$ is the original polar code, and $C_{\alpha} \big |_{\alpha = 0}$ is an RM code. Based on numerical observations, we remark that the error probability under MAP decoding is an increasing function of $\alpha$. MAP decoding has in general exponential complexity, but empirically the performance of polar codes at finite block lengths is boosted by moving along the family ${\mathcal C}_{\rm inter}$ even under low-complexity decoding schemes such as, for instance, belief propagation or successive cancellation list decoder. We demonstrate the performance gain via numerical simulations for transmission over the erasure channel as well as the Gaussian channel.
\end{abstract}

\begin{IEEEkeywords}
Polar codes, RM codes, MAP decoding, SC decoding, list decoding.
\end{IEEEkeywords}

\section{Introduction} \label{sec:intro}

{\bf \em Polar Coding: Benefits and Drawbacks.} Polar codes, which were
introduced by Ar{\i}kan in \cite{arikan:polar}, are a family of codes which provably achieve the capacity of a large class of channels, including binary-input memoryless output-symmetric channels (BMSCs), by means of
encoding and decoding algorithms with complexity $\Theta(N \log
N)$, $N$ being the block length of the code.

In particular, for any BMSC $W$ with capacity $I(W)$ and for any rate $R < I(W)$, the block error probability
under the proposed successive cancellation (SC) decoding,
namely $P_{\rm e}^{\rm SC}$, scales roughly as $2^{-\sqrt{N}}$
as $N$ grows large \cite{arikan:rate}. This result has been further refined and extended
to the MAP decoder, showing that both $\log_2(-\log_2 P_{\rm e}^{\rm
SC})$ and $\log_2(-\log_2 P_{\rm e}^{\rm MAP})$ behave as $\log_2 (N)/2+\sqrt{\log_2 (N)}/2\cdot
Q^{-1}(R/I(W)) + o(\sqrt{\log_2 (N)})$ for any fixed rate strictly less than
capacity \cite{hassani:scalingI, mori:rate}. Consequently, even at moderate
block lengths, error floors do not affect the performance of polar codes.

However, when we consider rates close to capacity, simulation
results show that large block lengths are required in order to achieve
a desired error probability. Therefore, it is interesting to
explore the trade-off between the gap to capacity $I(W)-R$ and the block length $N$ when the error
probability is a fixed value $P_{\rm e}$. In particular, it has been observed that $I(W)-R$ scales as $N^{-1/\mu}$, where $\mu$ denotes the scaling exponent \cite{korada:scaling}. Note that, in general, the scaling exponent is not related to the error exponent, since they concern two different regimes: for the scaling exponent, we fix the error probability and study the scaling of the gap to capacity with respect to the block length; for the error exponent, we fix the rate and study the scaling of the error probability with respect to the block length. For transmission over the binary erasure channel (BEC), an estimation for the scaling exponent is known, namely $\mu \approx 3.627$. Therefore, compared to random codes which have a scaling exponent of $2$, polar codes require larger block lengths to achieve the same rate and error probability. For a generic BMSC, taking as a proxy of the error probability
the sum of the Bhattacharyya parameters, the scaling
exponent is lower bounded by 3.553 \cite{HAU13} and upper bounded
by 5.77 \cite{GB13}. Furthermore, it is conjectured that the lower
bound on $\mu$ can be increased up to 3.627, namely, to the value for the BEC.

In order to improve the finite-length performance of polar codes, several decoding algorithms have been proposed. Maximum likelihood (ML) decoders are implemented via the Viterbi algorithm \cite{arikan:mlshort} and via sphere decoding \cite{kahraman:ml}, but are practical only for relatively short block lengths. A linear programming (LP) decoder is introduced in \cite{korada:lp},  and the performance under belief propagation (BP) decoding is considered in \cite{hussami:perfo}. The stopping set analysis for transmission over the BEC is also provided in \cite{eslami:finite}. A successive cancellation list (SCL) decoder is proposed in
\cite{vardy:listpolar}. Empirically, the usage of $L$ concurrent
decoding paths yields a significant improvement in the achievable
error probability and allows to obtain an error probability comparable to that under MAP decoding with practical values of the list size. However, it has been recently shown that, under MAP decoding, the introduction of any finite list does not change the scaling exponent \cite{mondelli:scalingITW}. In particular, for any BMSC and for any family of linear codes with unbounded minimum distance, list decoding cannot modify the scaling behavior for finite values of the list size. Analogously, under genie-aided SC decoding, the scaling exponent stays constant for any fixed number of helps from the genie, when transmission takes place over the BEC.

{\bf \em Reed-Muller Codes and Their Relation to Polar Coding.} RM codes were introduced by Muller \cite{muller:code} and rediscovered shortly thereafter with an efficient decoding algorithm by Reed \cite{reed:code}. The relation between polar codes and RM codes was first pointed out in \cite{arikan:polar}. Performance comparisons were carried out in \cite{arikan:rmpolarcomp, arikan:rmsurvey}. It was observed in \cite{korada:thesis} that Dumer's recursive algorithm for RM codes \cite{dumer:RMalgo} is similar to the SC decoder for polar codes. In addition, list decoding has been used also to improve the performance of RM codes \cite{dumer:RMlist, dumer:recl}. Furthermore, recursive techniques can be employed to decode nested polarized codes in which the splitting process ends at various short RM codes instead
of the single information bits used as end nodes in polar
codes \cite{dumer:manu, dumer:manu2}. Numerical simulations and analytical results suggest that RM codes have a bad performance under successive and iterative decoding, but they outperform polar codes under MAP decoding \cite{arikan:polar, hussami:perfo}. Indeed, RM codes have better minimum distance properties and an hybrid design which combines the construction of RM and polar codes is introduced in \cite{tse:manu}. However, no rigorous results are known and the fundamental problem concerning whether RM codes are capacity-achieving under MAP decoding, at least for some channels with a sufficient amount of symmetry, remains open \cite{costello:roadcapacity}. 

{\bf \em Contribution of the Present Work.} In this paper we propose an interpolation method between the polar code of block length $N$ and rate $R$ and an RM code of the same block length and rate. To do so, we describe a family of codes ${\mathcal C}_{\rm inter} = \{C_{\alpha} : \alpha \in [0, 1]\}$ such that $C_{\alpha} \big |_{\alpha = 1}$ is the original polar code, and $C_{\alpha} \big |_{\alpha = 0}$ is an RM code. We remark that experimentally the error probability under MAP decoding increases with $\alpha$ for transmission over the BEC and over the binary additive white Gaussian noise channel (BAWGNC). Even if MAP decoding is in general an NP-complete task, this result is relevant in practice because picking suitable codes from ${\mathcal C}_{\rm inter}$ boosts the finite length performance of the original polar code also when low-complexity suboptimal algorithms are employed. In particular, a remarkable performance improvement is noticed adopting the SCL decoder proposed in \cite{vardy:listpolar} and the BP decoder. This performance gain could be substantial in the sense of the reduction of the scaling exponent: according to numerical simulations performed for $N=2^{10}$ over the BEC, the error probability under MAP decoding for the transmission of $C_{\alpha}$ for $\alpha$ sufficiently small is very close to that of random codes. As a result, the usage of codes in ${\mathcal C}_{\rm inter}$ potentially improves the speed at which capacity is reached.

{\bf \em Organization.} Section \ref{sec:interp} points out similarities and differences between the polar and the RM construction and describes explicitly the interpolating family ${\mathcal C}_{\rm inter}$ for the special case of the transmission over the BEC. Starting from the analysis of the two extreme cases of MAP and SC decoding, Section \ref{sec:perfoBEC} shows how to improve significantly the finite-length performance of polar codes by using codes of the form $C_{\alpha}$ decoded with low-complexity suboptimal schemes when transmission takes place over the BEC. The interpolation method between RM and polar codes is described for the transmission over a generic BMSC $W$ in Section \ref{sec:perfoBMSC}, where the simulation results for the BAWGNC are presented as a case study. Finally, Section \ref{sec:concl} draws the conclusions of the paper.

\section{From Polar to RM Codes: an Interpolation Method for the BEC} \label{sec:interp}

Let $n \in {\mathbb N}$ and $N=2^n$. Consider the $N \times N$ matrix $G_N$ defined as follows,
\begin{equation}
G_N = F^{\otimes n}, \qquad \qquad F =  \biggl[ \begin{array}{cc}
1 & 0 \\
1 & 1 \end{array} \biggr], 
\end{equation}
where $F^{\otimes n}$ denotes the $n$-th Kronecker power of $F$. As it has been formerly pointed out in \cite{arikan:polar}, the generator matrices of both polar and RM codes are obtained by suitably selecting rows from $G_N = (g_1, \cdots, g_N)^T$. 

In particular, the \emph{RM rule} for building a code of block length $N$ and minimum distance $2^k$ for some fixed $k \in \{0, 1, \cdots, n\}$ consists in choosing the rows of $G_N$ with Hamming weight at least $2^k$. Thus, the rate $R$ of this code is given by
\begin{equation} \label{eq:RMrate}
R = \frac{\displaystyle\sum_{i=k}^n \binom{n}{i}}{N}.
\end{equation}
In general, if we require an RM code with fixed block length $N$ and rate $R$, where $R$ cannot be written in the form \eqref{eq:RMrate} for some $k \in {\mathbb N}$, we take as generator matrix any subset of $NR$ rows of $G_N$ with the highest Hamming weights. Notice that this criterion is channel-independent in the sense that it does not rely on the particular channel over which the transmission takes place. 

On the other hand, the \emph{polar rule} is channel-specific. Indeed, the $N$ synthetic channels $W_N^{(i)}$ ($i \in \{0, \cdots, N-1\}$) are obtained from $N$ independent copies of the original channel $W$. The row $g_i$ is associated to $W_N^{(i)}$ and the synthetic channels (and, therefore, the rows) with the lowest Bhattacharyya parameters\footnote{The Bhattacharyya parameter $Z_i$ of the synthetic channel $W_N^{(i)}$ represents a measure of the reliability of the channel: $Z_i$ is close to 0 or to 1 if and only if the entropy of the $i$-th position given the previous $i-1$ bits is close to 0 or to 1, respectively. Hence, if $Z_i$ is close to 0, then the $i$-th position can be decoded with high probability given the previous $i-1$ bits, while if $Z_i$ is close to 1, the decoding fails with high probability.} are selected. In general, different channels $W$ yield different choices of rows. Let us consider the simple case of the transmission over the binary erasure channel with erasure probability $\varepsilon$, in short BEC$(\varepsilon)$, for fixed $\varepsilon \in (0, 1)$. In this particular scenario, the Bhattacharyya parameter $Z_i$ associated to $W_N^{(i)}$ (and, therefore, to $g_i$) is given by
\begin{equation} \label{eq:Bhatt}
Z_i(\varepsilon) = f_{b_1^{(i)}}\circ f_{b_2^{(i)}}\circ \cdots f_{b_n^{(i)}}(\varepsilon),
\end{equation}  
where $f_0(x) = 1-(1-x)^2$, $f_1(x) = x^2$, $\circ$ denotes function composition, and $b^{(i)}=(b_1^{(i)}, b_2^{(i)}, \cdots, b_n^{(i)})^T$ is the binary expansion of $i$ over $n$ bits, $b_1^{(i)}$ being the most significant bit and $b_n^{(i)}$ the least significant bit. In order to construct a code of block length $N$ and rate $R$, we select the $NR$ rows which minimize the expression \eqref{eq:Bhatt}. 

The link between the RM rule and the polar rule is clarified by the following proposition.

\begin{prop} \label{prop:RMP}
The polar code of block length $N$ and rate $R$ designed for transmission over a BEC$(\varepsilon)$, when $\varepsilon \to 0$, is an RM code.
\end{prop}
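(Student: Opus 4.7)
\emph{Proof plan.} The plan is to compute the asymptotic behaviour of the Bhattacharyya parameters $Z_i(\varepsilon)$ as $\varepsilon \to 0$ and show that the induced ordering on the rows of $G_N$ coincides with the ordering by Hamming weight used to define the RM code.

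First, I would establish that the Hamming weight of the row $g_i$ of $G_N$ equals $2^{w(b^{(i)})}$, where $w(b^{(i)}) = \sum_{j=1}^n b_j^{(i)}$ denotes the Hamming weight of the binary expansion of $i$. This is a direct consequence of the Kronecker product structure: the two rows of $F$ have Hamming weights $1$ and $2$ respectively, so the row of $F^{\otimes n}$ indexed by $b^{(i)}$ is the Kronecker product of the corresponding rows of $F$ and therefore has weight $\prod_{j=1}^n (1 + b_j^{(i)}) = 2^{w(b^{(i)})}$.

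Next, I would analyse the composition \eqref{eq:Bhatt} by a simple induction on $n$. Since $f_0(x) = 2x - x^2 = 2x(1+O(x))$ and $f_1(x) = x^2$, whenever $Z(\varepsilon) = c\,\varepsilon^d(1 + O(\varepsilon))$ with $c > 0$, one obtains $f_0(Z(\varepsilon)) = 2c\,\varepsilon^d(1+O(\varepsilon))$ and $f_1(Z(\varepsilon)) = c^2\,\varepsilon^{2d}(1+O(\varepsilon))$. Starting from $\varepsilon$ itself and applying the functions in the order prescribed by \eqref{eq:Bhatt}, a straightforward induction gives
$$
Z_i(\varepsilon) \;=\; c_i\,\varepsilon^{\,2^{w(b^{(i)})}}(1 + O(\varepsilon)),
$$
for some constant $c_i > 0$ depending only on $b^{(i)}$: the exponent doubles at each coordinate where $b_j^{(i)} = 1$ and is left unchanged (up to a factor of $2$ in the constant) where $b_j^{(i)} = 0$.

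Combining the two ingredients, I would conclude that for all sufficiently small $\varepsilon > 0$ one has $Z_i(\varepsilon) < Z_j(\varepsilon)$ whenever the row $g_i$ has strictly larger Hamming weight than $g_j$. Consequently the set of $NR$ rows that minimize $Z_i(\varepsilon)$ is, up to tie-breaking among rows of equal Hamming weight, precisely the set of $NR$ rows of $G_N$ of largest Hamming weight, which is exactly the RM construction described around \eqref{eq:RMrate}. The main subtlety I expect is handling ties: indices $i$ with the same $w(b^{(i)})$ share the same dominant exponent but have (in general) different constants $c_i$, so the polar rule breaks ties by comparing these $c_i$'s; since the RM definition given in the paper already allows an arbitrary choice among rows of equal weight, the resulting polar code is indeed a valid RM code in that generalized sense, and no additional argument about the fine structure of the $c_i$'s is needed.
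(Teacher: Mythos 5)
Your proposal is correct and takes essentially the same route as the paper's proof: both rest on the observation that the lowest-order term of $Z_i(\varepsilon)$ as $\varepsilon \to 0$ has degree $2^{w_{\rm H}(b^{(i)})}$, so that for sufficiently small $\varepsilon$ the Bhattacharyya ordering of the rows coincides with the ordering by Hamming weight. The only differences are presentational: the paper cites Ar{\i}kan's Proposition 17 for the row-weight formula and argues by contradiction on a single pair of rows, whereas you rederive the weight formula, carry out the induction explicitly, and spell out the tie-breaking point, which the paper's generalized definition of an RM code absorbs in the same way.
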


\begin{proof}
Suppose that the thesis is false, i.e., that we include  $g_{j^*}$, but not $g_{i^*}$, with $w_{\rm H}(g_{i^*}) > w_{\rm H}(g_{j^*})$, where $w_{\rm H}(\cdot)$ denotes the Hamming weight. Since $w_{\rm H}(g_i) =2^{\sum_{k=1}^n b_k^{(i)}} = 2^{w_{\rm H}(b^{(i)})}$ for any $i \in \{0, \cdots, N-1\}$ (Proposition 17 of \cite{arikan:polar}), then $w_{\rm H}(b^{(i^*)}) > w_{\rm H}(b^{(j^*)})$.

From formula \eqref{eq:Bhatt}, one deduces that $Z_i(\varepsilon)$ is a polynomial in $\varepsilon$ with minimum degree equal to $2^{w_{\rm H}(b^{(i)})}$. Hence,
\begin{equation*}
\lim_{\varepsilon \to 0} \frac{Z_{i^*}(\varepsilon)}{Z_{j^*}(\varepsilon)}=0,
\end{equation*}
which means that there exists $\delta >0$ s.t. for all $\varepsilon < \delta$, $Z_{i^*}(\varepsilon) < Z_{j^*}(\varepsilon)$. Consequently, a polar code designed for transmission over a BEC$(\varepsilon)$, with $\varepsilon < \delta$, which includes $g_{j^*}$ must also include $g_{i^*}$. This is a contradiction.
\end{proof}

Recall that the transmission takes place over $W =$ BEC$(\varepsilon)$. Let $C_{\alpha}$ be the polar code of block length $N$ and rate $R$ designed for a BEC$(\alpha \varepsilon)$. When $\alpha = 1$, $C_{\alpha}$ reduces to the polar code for the channel $W$, while, when $\alpha \to 0$, $C_{\alpha}$ becomes an RM code by Proposition \ref{prop:RMP}. Consider the family of codes ${\mathcal C}_{\rm inter}$ defined as,
\begin{equation} \label{eq:Cinterp}
{\mathcal C}_{\rm inter} = \{C_{\alpha} : \alpha \in [0, 1]\}.
\end{equation}
The codes in ${\mathcal C}_{\rm inter}$ provide an \emph{interpolation method} to pass smoothly from a polar code to an RM code of the same rate and block length. Indeed, consider the generator matrices of the codes in ${\mathcal C}_{\rm inter}$ which are obtained reducing $\alpha$ from $1$ to $0$. We start from the generator matrix of the polar code and the successive matrices are obtained by changing one row at a time. In particular, numerical simulations show that the row which is included in the next code (associated to a smaller $\alpha$) has a higher Hamming weight than the row which was removed from the previous code (associated to a higher $\alpha$). Heuristically, this happens for the following reason. The row indices chosen by $C_{\alpha}$ are the ones which minimize the associated Bhattacharyya parameters $Z_i(\alpha \varepsilon)$ given by \eqref{eq:Bhatt}. As $f_1(x)\le f_0(x)$ for any $x \in[0, 1]$, applying $f_1$ instead of $f_0$ makes the Bhattacharyya parameter decrease. However, also the order in which the functions are applied is important, since $f_0 \circ f_1(x) \le f_1 \circ f_0(x)$ for any $x \in [0, 1]$: if we fix $w_{\rm H}(b^{(i)})$, $Z_i$ is minimized by applying first all the functions $f_1$ and then the functions $f_0$. Therefore, the goodness of the index $i$ depends both on the number of $1$'s in its binary expansion $b^{(i)}$ and on the positions of these $1$'s. On the other hand, when designing an RM code only $w_{\rm H}(b^{(i)})$ matters and, for $\alpha$ small enough, $C_{\alpha}$ tends to an RM code. As a result, as $\alpha$ goes from $1$ to $0$, the value of $Z_i(\alpha \varepsilon)$ depends more and more on $w_{\rm H}(b^{(i)})$ than on the position of the $1$'s in $b^{(i)}$.

\section{Improving the Finite-Length Performance of Polar Codes for the BEC} \label{sec:perfoBEC}

The focus of this section is on the performance of the codes in ${\mathcal C}_{\rm inter}$ when transmission takes place over the BEC$(\varepsilon)$. We start considering the MAP decoder and then move to the SC decoder introduced by Ar{\i}kan. By taking into account low-complexity suboptimal decoding schemes which outperform the original SC algorithm (e.g., SCL and BP), we highlight the advantage of employing codes of the form $C_{\alpha}$. The simulation results of this section refer to codes of fixed block length $N= 2^{10}$ and rate $R=0.5$. The number of Monte Carlo trials is $M = 10^5$.

\subsection{Motivation: MAP Decoding}

Since it has been observed that under MAP decoding picking the rows of $G_N$ according to the RM rule significantly improves the performance with respect to the polar choice \cite{hussami:perfo}, it is interesting to analyze the error probability $P_{\rm e}^{\rm MAP}(\alpha, \varepsilon)$ under MAP decoding for the transmission of the code $C_{\alpha}$ over the BEC$(\varepsilon)$. Although MAP decoding is in general an NP-complete task, for the particular case of the BEC it is equivalent to the inversion of a suitable matrix and, therefore, can be performed in ${\mathcal O}(N^3)$.

\begin{figure}[t] 
\centering 
\includegraphics[width=0.86\columnwidth]{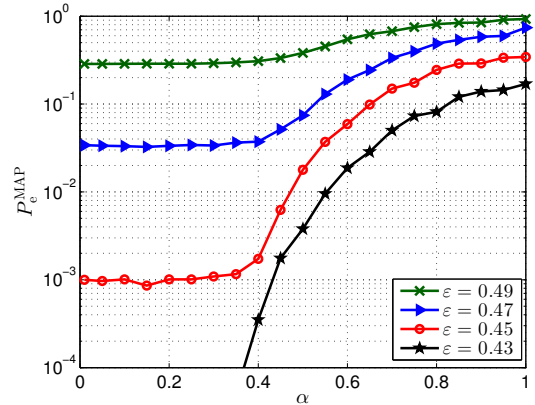}
\caption{Error probability $P_{\rm e}^{\rm MAP}$ under MAP decoding for the transmission of $C_{\alpha}$ over the BEC$(\varepsilon)$, when $\alpha$ varies from $0$ to $1$ with a step of $0.05$ and $\varepsilon$ is given four distinct values. The block length is $N=2^{10}$ and the rate is $R=0.5$. Observe that $P_{\rm e}^{\rm MAP}$ is increasing in $\alpha$ for all values of $\varepsilon$, which means that the minimum error probability is achieved by the RM code $C_{\alpha}\big|_{\alpha = 0}$.} 
\label{fig:MAPepfix}
\end{figure} 

First of all, fix the value of $\varepsilon$ and consider how $P_{\rm e}^{\rm MAP}$ varies as a function of $\alpha$. As it is shown in Figure \ref{fig:MAPepfix} for four distinct values of $\varepsilon$, $P_{\rm e}^{\rm MAP}(\alpha, \varepsilon)$ is increasing in $\alpha$. In short, the proposed interpolation method to pass from the polar code $C_{\alpha} \big|_{\alpha = 1}$ to an RM code $C_{\alpha} \big|_{\alpha = 0}$ yields a family of codes with \emph{decreasing} MAP error probability. This conjecture, if proved, would imply that RM codes are capacity-achieving for the BEC, which is a long-standing open problem in coding theory. Another evidence in support of this statement is as follows. As it has been pointed out in Section \ref{sec:interp}, the polar rule differs from the RM rule in the fact that not only the number, but also the position of the $1$'s in $b^{(i)}$ matters in the choice of the row indices. In particular, polar codes prefer to set the $1$'s in the least significant bits of the binary expansion of $i$. However, if one is concerned with achieving the capacity of the BEC under MAP decoding, the specific order of the $1$'s in the binary expansions of the row indices does not play any role. Indeed, denote by ${\mathcal F}$ the set of row indices of $G_N$ which are not chosen for the generator matrix of the polar code (these indices are \emph{frozen}, since they are not used for the transmission of information bits) and let ${\mathcal F}^c$ be its complement. Then, it is possible to arbitrarily permute the binary expansions $b^{(i)}$ ($i \in {\mathcal F}^c$) and still get a set of row indices which yields a capacity-achieving family of codes under MAP decoding. This fact is formalized in the following proposition.

\begin{prop} \label{prop:overcomplete}
Denote by ${\mathcal F}^c$ the set of row indices chosen by polar coding. Let $\pi : \{1, \cdots, n\} \to \{1, \cdots, n\}$ be a permutation and let $P_{\pi}$ be the associated permutation matrix. Construct the code $C_{\pi}$ by taking the rows of $G_N$ whose indices have binary expansions $P_{\pi} b^{(i)}$  for $i \in {\mathcal F}^c$. Let $\varepsilon \in (0, 1)$ and denote by $P_{\rm e}^{\mathcal D}(C_{\pi})$ the error probability under the decoder $\mathcal D$ for the transmission of $C_{\pi}$ over the BEC$(\varepsilon)$. Then, $P_{\rm e}^{\rm MAP}(C_{\pi}) \le P_{\rm e}^{\rm SC}(C_{\iota})$, $C_{\iota}$ being the original polar code.
\end{prop}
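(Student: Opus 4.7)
The cleanest route is to observe that $C_{\pi}$ is nothing but the original polar code $C_{\iota}$ with its codeword coordinates permuted, so it has identical MAP performance over the BEC; the claimed inequality then follows from the optimality of MAP decoding over SC. My plan is to translate $\pi$ into a permutation $\sigma$ of the row indices $\{0,\dots,N-1\}$ via $b^{(\sigma(i))} = P_{\pi} b^{(i)}$ (which, under the natural identification $\{0,\dots,N-1\}\cong \{0,1\}^n$, is just the action of $\pi$ on the tensor factors), then to prove that the associated permutation matrix $P_{\sigma}$ commutes with $G_N$, and finally to invoke the symmetry of the BEC.

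The key structural fact is $P_{\sigma}G_N = G_N P_{\sigma}$. This holds because $G_N = F^{\otimes n}$ is a tensor power and therefore commutes with every operator that merely permutes the $n$ tensor factors. Equivalently, using the closed form $(G_N)_{i,j} = \mathbbm{1}\{b^{(j)}\preceq b^{(i)}\}$ with $\preceq$ the bitwise partial order (Proposition 17 of \cite{arikan:polar}), the predicate $b^{(j)}\preceq b^{(i)}$ is invariant under the simultaneous action of $\sigma$ on $i$ and $j$. Once this identity is in place, the generator matrix of $C_{\pi}$ has rows indexed by $\sigma(\mathcal{F}^c)$, so its codewords are $\{vG_N : \mathrm{supp}(v)\subseteq \sigma(\mathcal{F}^c)\}$; substituting $v = u P_{\sigma}$ with $\mathrm{supp}(u)\subseteq \mathcal{F}^c$ turns this set into $\{u G_N P_{\sigma}\} = C_{\iota} P_{\sigma}$, which exhibits $C_{\pi}$ as a coordinate permutation of $C_{\iota}$.

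Since the BEC is memoryless and symmetric, any coordinate permutation preserves the MAP error probability, so $P_{\rm e}^{\rm MAP}(C_{\pi}) = P_{\rm e}^{\rm MAP}(C_{\iota}) \le P_{\rm e}^{\rm SC}(C_{\iota})$, which is the claim (and in fact gives the stronger equality to $P_{\rm e}^{\rm MAP}(C_{\iota})$). The only substantive ingredient is the identity $P_{\sigma}G_N = G_N P_{\sigma}$, after which the argument is essentially forced; I expect the main nuisance to be careful bookkeeping of conventions for permutation matrices (row vs.\ column action, left vs.\ right multiplication) so that the substitution $v = u P_{\sigma}$ really maps $\mathrm{supp}(u)\subseteq \mathcal{F}^c$ onto $\mathrm{supp}(v)\subseteq \sigma(\mathcal{F}^c)$.
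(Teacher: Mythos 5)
Your proposal is correct, but it takes a genuinely different route from the paper. The paper never observes that $C_{\pi}$ and $C_{\iota}$ are equivalent codes; instead it exploits the $n!$ overcomplete representations of the polar code obtained by permuting the $n$ layers of the encoding graph, shows that running SC on the $\tau$-permuted representation of $C_{\iota}$ has the same error probability as running SC on the code $C_{\tau}$, introduces the OSC decoder (which, over the BEC, fails only if some information bit is decodable by none of the $n!$ SC decoders), and chains $P_{\rm e}^{\rm MAP}(C_{\pi}) \le P_{\rm e}^{\rm OSC}(C_{\pi}) \le P_{\rm e}^{{\rm SC},\pi^{-1}}(C_{\pi}) = P_{\rm e}^{\rm SC}(C_{\iota})$. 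Your argument instead rests on the identity $P_{\sigma}G_N = G_N P_{\sigma}$ (permuting tensor factors of $F^{\otimes n}$), which exhibits $C_{\pi}$ as a coordinate permutation of $C_{\iota}$; combined with the i.i.d. nature of the $N$ channel uses and the optimality of MAP, this yields the stronger conclusion $P_{\rm e}^{\rm MAP}(C_{\pi}) = P_{\rm e}^{\rm MAP}(C_{\iota}) \le P_{\rm e}^{\rm SC}(C_{\iota})$, and in fact it is not tied to the BEC at all: it holds verbatim for any BMSC, whereas the paper's OSC step is erasure-specific. What the paper's route buys is the operational picture of multiple SC representations (in the spirit of the cited BP/overcomplete-representation work), which fits the narrative of the section; what your route buys is a sharper and more general statement with an essentially structural one-line core. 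Two small points of hygiene: the property you actually need from the channel is that the $N$ uses are independent and identically distributed (symmetry is not needed for permutation invariance of the MAP block error probability), and the bitwise partial-order description of $G_N$ is not quite what Proposition 17 of Ar{\i}kan's paper states (that proposition gives $w_{\rm H}(g_i)=2^{w_{\rm H}(b^{(i)})}$), but this is immaterial since the tensor-power commutation argument already gives you the identity you need.
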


\begin{proof}
As observed in \cite{hussami:perfo}, there exist $n!$ different representations of the polar code $C_{\iota}$ of block length $N=2^n$ obtained by permuting the $n$ layers of connections. Let us apply the permutation $\tau$ to these layers and then run the SC algorithm, denoting by $P_{\rm e}^{{\rm SC}, \tau}(C_{\iota})$ the error probability for transmission over the BEC$(\varepsilon)$. The application of the permutation $\tau$ affects the Bhattacharyya parameter $Z_i$ associated to the synthetic channel $W_N^{(i)}$, which is now given by
\begin{equation*}
Z_i(\varepsilon) = f_{\tau(b_1^{(i)})}\circ f_{\tau(b_2^{(i)})}\circ \cdots f_{\tau(b_n^{(i)})}(\varepsilon).
\end{equation*}
On the other hand, the generator matrix (and, consequently, the set ${\mathcal F}^c$) does not change, because the code stays the same. Therefore, the probability that the SC decoder fails when applying the permutation $\tau$ to the layers of the code $C_{\iota}$ equals the probability that the SC decoder fails when the code $C_{\tau}$ is employed. In formulas, for any permutation $\tau$,
\begin{equation*}
P_{\rm e}^{{\rm SC}, \tau}(C_{\iota}) = P_{\rm e}^{\rm SC}(C_{\tau}).
\end{equation*}

Denote by OSC the algorithm which runs SC decoding over all the $n!$ possible overcomplete representation of a polar code. When transmission takes place over the BEC, the OSC decoder fails if and only if there exists an information bit which cannot be decoded by any of these $n!$ SC decoders. Let $P_{\rm e}^{\rm OSC}(C_{\pi})$ be the error probability under OSC decoding for transmission of the code $C_{\pi}$ over the BEC$(\varepsilon)$. Then, $P_{\rm e}^{\rm OSC}(C_{\pi}) \le P_{\rm e}^{{\rm SC}, \tau}(C_{\pi})$ for any $\tau$. Taking $\tau = \pi^{-1}$ and recalling that MAP decoding minimizes the error probability, we obtain that 
\begin{equation*}
P_{\rm e}^{\rm MAP}(C_{\pi}) \le P_{\rm e}^{\rm OSC}(C_{\pi}) \le P_{\rm e}^{{\rm SC}, \pi^{-1}}(C_{\pi}) = P_{\rm e}^{\rm SC}(C_{\iota}),
\end{equation*}
which gives us the desired result.
\end{proof}

In Figure \ref{fig:MAPalphafix} we fix the value of $\alpha$ and we analyze $P_{\rm e}^{\rm MAP}$ as a function of $\varepsilon$. It is interesting to remark that already for $\alpha = 0.3$, the error probability for the transmission of $C_{\alpha}$ is very close to that of random coding, which not only achieves capacity, but does so with a more favorable tradeoff between $N$ and $I(W)-R$. Indeed, random codes have a scaling exponent $\mu =2$, while the scaling exponent of polar codes is $\mu = 3.627$\footnote{Note that there is no conflict between the facts that (i) the error exponent of RM codes under MAP decoding cannot be as good as that of random codes because of their minimum distance \cite{hussami:perfo} and (ii) the scaling exponent of RM codes can match that of random codes. Indeed, the error exponent and the scaling exponent concern two different limits. For example, an error probability of the form $2^{-a\sqrt{N}}+ 2^{-b N(C-R)^2}$ for some constants $a$ and $b$ yields the error exponent of polar codes and, at the same time, the scaling exponent of random codes.}. 

\begin{figure}[t] 
\centering 
\includegraphics[width=0.86\columnwidth]{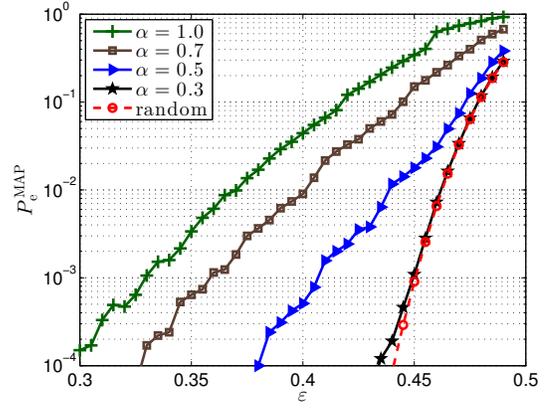}
\caption{Error probability $P_{\rm e}^{\rm MAP}$ under MAP decoding for the transmission of $C_{\alpha}$ over the BEC$(\varepsilon)$, when $\varepsilon$ varies from $0.30$ to $0.49$ with a step of $0.005$ and $\alpha$ is given four distinct values. The block length is $N=2^{10}$ and the rate is $R=0.5$. Remark that already for $\alpha =0.3$ the error performance of $C_{\alpha}$ is comparable to that of random codes.} 
\label{fig:MAPalphafix}
\end{figure}

\subsection{SC Decoding}

After dealing with optimal MAP decoding, let us analyze the performance of the codes in ${\mathcal C}_{\rm inter}$ under SC decoding. As can be seen in Figure \ref{fig:SCepfix} for four distinct values of $\varepsilon$, the error probability $P_{\rm e}^{\rm SC}(\alpha, \varepsilon)$ under SC decoding for transmission of the code $C_{\alpha}$ over the BEC$(\varepsilon)$ is a decreasing function of $\alpha$. Hence, the best performance are obtained using the polar code $C_{\alpha}\big |_{\alpha =1}$. The theoretical reason of this behavior lies in the fact that $P_{\rm e}^{\rm SC}$ can be well approximated by the sum of the Bhattacharyya parameters of the synthetic channels which are selected by the polar code for transmission of the information bits \cite{mani:correlation}. Formally, let ${\mathcal F}^c(\alpha)$ be the set of indices which are selected by the polar code $C_{\alpha}$. Then,
\begin{equation} \label{eq:unionbound}
P_{\rm e}^{\rm SC}(\alpha) \apprle \sum_{i \in {\mathcal F}^c(\alpha)}Z_i(\varepsilon).
\end{equation}
The bound \eqref{eq:unionbound} is tight and $\sum_{i \in {\mathcal F}^c(\alpha)}Z_i(\varepsilon)$ is minimized for $\alpha = 1$.


\begin{figure}[t] 
\centering 
\includegraphics[width=0.86\columnwidth]{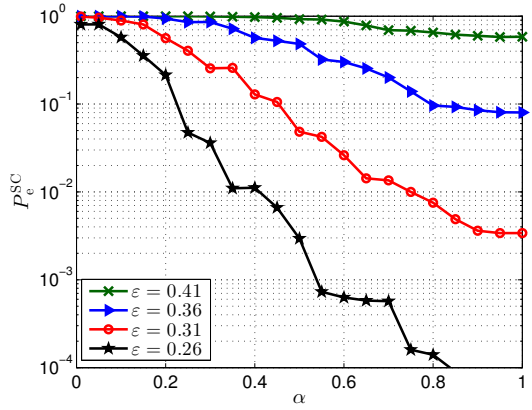}
\caption{Error probability $P_{\rm e}^{\rm SC}$ under SC decoding for the transmission of $C_{\alpha}$ over the BEC$(\varepsilon)$, when $\alpha$ varies from $0$ to $1$ with a step of $0.05$ and $\varepsilon$ is given four distinct values. The block length is $N=2^{10}$ and the rate is $R=0.5$. Observe that $P_{\rm e}^{\rm SC}$ is decreasing in $\alpha$, which means that the minimum $P_{\rm e}^{\rm SC}$ is achieved by the original polar code $C_{\alpha}\big|_{\alpha=1}$.} 
\label{fig:SCepfix}
\end{figure}


\subsection{Something Between the Two Extremes: List Decoding and Belief Propagation}


\begin{figure}[t]
\centering
\subfigure[$\alpha =0.9$]
{\includegraphics[width=0.86\columnwidth]{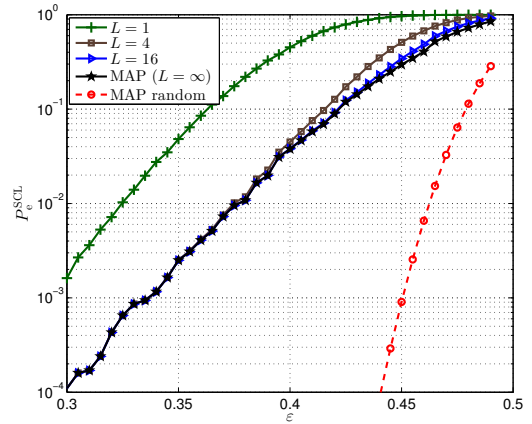}}\\
\subfigure[$\alpha=0.4$]
{\includegraphics[width=0.86\columnwidth]{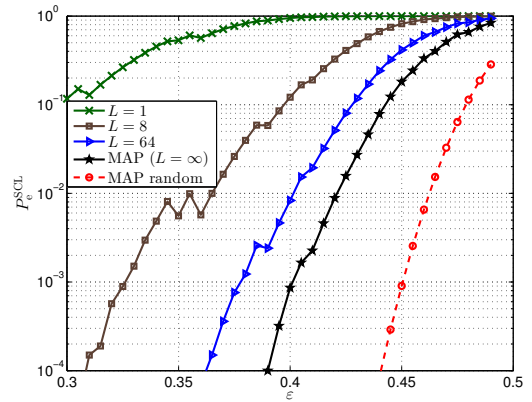}}
\caption{Error probability $P_{\rm e}^{\rm SCL}$ under SCL decoding for the transmission of $C_{\alpha}$ over the BEC$(\varepsilon)$ for different values of the list size $L$, when $\varepsilon$ varies from $0.30$ to $0.49$ with a step of $0.005$. The block length is $N=2^{10}$ and the rate is $R=0.5$. As a benchmark, we represent also the error probability under MAP decoding for the transmission of $C_{\alpha}$ (in black) and for the transmission of a random code (in red). Observe that if $\alpha$ is big (upper plot), $P_{\rm e}^{\rm SCL}$ converges to $P_{\rm e}^{\rm MAP}$ already for small values of the list size. On the other hand, if $\alpha$ is small (lower plot), bigger list sizes are required to get to the error probability of MAP decoding, which in return becomes much smaller in value and, therefore, much closer to the error probability of a random code. The fact that some curves are not always increasing in $\varepsilon$ is not caused by a problem in the simulation. Indeed, the code changes with $\varepsilon$ and, for a small variation of the channel parameter, this can lead to such unexpected effects, which can be noticed also in Figures \ref{fig:SCLdiffL} and \ref{fig:BPalphafix}.}
\label{fig:SCLdiffalpha}
\end{figure}

Consider the SCL scheme introduced in \cite{vardy:listpolar} and denote by $P_{\rm e}^{\rm SCL}(\alpha, \varepsilon, L)$ the error probability under SCL decoding with list size $L$ for transmission of the polar code $C_{\alpha}$ over the BEC$(\varepsilon)$. 
Clearly, if $L=1$, this scheme reduces to the SC algorithm originally proposed by Ar{\i}kan, while for $L \ge 2^{NR}$, the SCL decoder is equivalent to the MAP decoder, since the list is big enough to contain all the possible $2^{NR}$ codewords. Therefore, as $L$ increases, we gradually pass from SC decoding to MAP decoding. 

If we fix $\alpha$ and we let $L$ grow, $P_{\rm e}^{\rm SCL}(\alpha, \varepsilon, L)$ monotonically decreases from $P_{\rm e}^{\rm SC}(\alpha, \varepsilon)$ to $P_{\rm e}^{\rm MAP}(\alpha, \varepsilon)$. Recall that, as $\alpha$ goes from $1$ to $0$, $P_{\rm e}^{\rm SC}(\alpha, \varepsilon)$ increases, while $P_{\rm e}^{\rm MAP}(\alpha, \varepsilon)$ decreases. Values of $\alpha$ close to $1$ imply that $P_{\rm e}^{\rm SCL}(\alpha, \varepsilon, L)$ gets close to the MAP error probability for small values of the list size. If $\alpha$ is reduced, a bigger list size is required to obtain performance comparable to MAP decoding since the underlying SC algorithm gets worse, but $P_{\rm e}^{\rm MAP}(\alpha, \varepsilon)$ becomes significantly smaller. In other words, a smaller $\alpha$ implies a slower converge (in terms of $L$) toward a smaller error probability. This trade-off between MAP error probability and list size required to reach it is illustrated in Figure \ref{fig:SCLdiffalpha} for $\alpha =0.9$ and $\alpha =0.4$, where, as a benchmark, we represent also the average error probability under MAP decoding for the transmission of random codes. 

\begin{figure}[t]
\centering
\subfigure[$L=8$]
{\includegraphics[width=0.86\columnwidth]{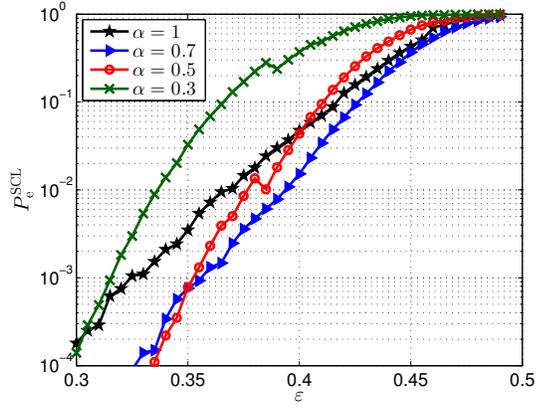}}\\
\subfigure[$L=32$]
{\includegraphics[width=0.86\columnwidth]{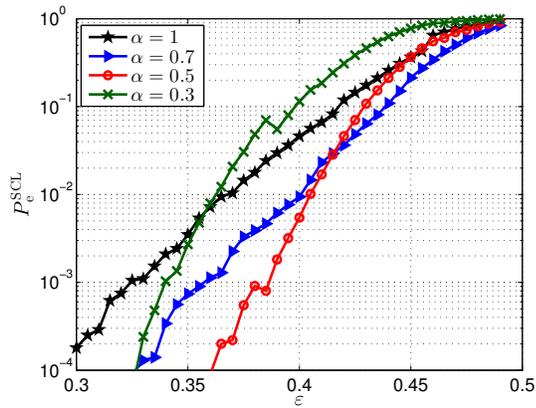}}
\caption{Error probability $P_{\rm e}^{\rm SCL}$ under SCL decoding for the transmission of $C_{\alpha}$ over the BEC$(\varepsilon)$, when $\varepsilon$ varies from $0.30$ to $0.49$ with a step of $0.005$ and for different values of $\alpha$. The block length is $N=2^{10}$ and the rate is $R=0.5$. Already when $L=8$ (upper plot), a performance improvement is obtained reducing $\alpha$ with respect to the original polar code $C_{\alpha}\big|_{\alpha=1}$. If the list size is increased to $L=32$ (lower plot), the advantage in considering codes $C_{\alpha}$ with a smaller value of the tuning parameter $\alpha$ is even more evident.}
\label{fig:SCLdiffL}
\end{figure}

In order to show that the usage of codes in ${\mathcal C}_{\rm inter}$ significantly improves the finite-length performance of polar codes for practical values of the list size, fix $L$ and consider the transmission of $C_{\alpha}$ for different values of $\alpha$. The results for $L=8$ and $L=32$ are represented in Figure \ref{fig:SCLdiffL}. The code $C_{\alpha}\big|_{\alpha=0.7}$ outperforms the original polar scheme already when $L=8$. If the decoder is allowed to take $L=32$, the improvement in performance is even more significant and, for example, the target error probability $P_{\rm e} =10^{-3}$ can be obtained for $\varepsilon = 0.385$ if we employ $C_{\alpha}\big|_{\alpha=0.5}$, while $\varepsilon = 0.325$ is required if we employ the original polar code $C_{\alpha}\big|_{\alpha=1}$. Remark that if the target error probability to be met is very low, it is convenient to consider codes $C_{\alpha}$ with small $\alpha$, since they will be able to achieve it for higher erasure probabilities of the BEC. Indeed, observe that in the case $L=32$, $C_{\alpha}\big|_{\alpha=0.3}$ outperforms the original polar code for $P_{\rm e}^{\rm SCL} < 10^{-3}$. This effect is due to the fact that, for any fixed rate less than capacity, $P_{\rm e}^{\rm SC}$ scales with $N$ as $2^{-\sqrt{N}}$ and, hence, polar codes are not affected by error floors.

In general, it is convenient to consider codes of the form $C_{\alpha}$ whenever the decoding algorithm yields better results than the SC decoder. As another example, consider the case of the BP decoder. It has been already pointed out that the polar choice of the row indices to be selected from $G_N$ is not optimal for the BP algorithm \cite{hussami:perfo, eslami:finite}, but no systematic rule capable of outperforming polar codes is known. As can be seen in Figure \ref{fig:BPalphafix}, the interpolating family ${\mathcal C}_{\rm inter}$ contains codes which achieve a smaller error probability than that of the original polar code $C_{\alpha} \big |_{\alpha = 1}$ for an appropriate choice of the parameter $\alpha$.

\begin{figure}[t] 
\centering 
\includegraphics[width=0.86\columnwidth]{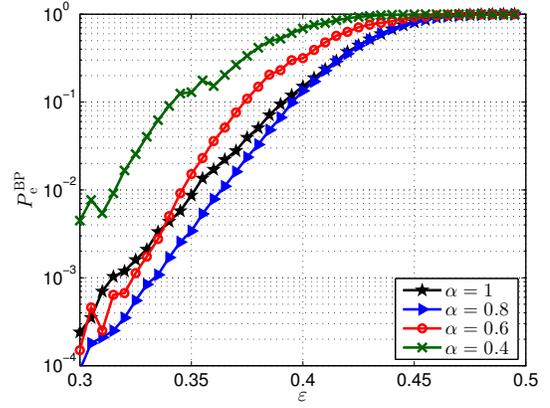}
\caption{Error probability $P_{\rm e}^{\rm BP}$ under BP decoding for the transmission of $C_{\alpha}$ over the BEC$(\varepsilon)$, when $\varepsilon$ varies from $0.30$ to $0.49$ with a step of $0.005$ and $\alpha$ is given four distinct values. The block length is $N=2^{10}$ and the rate is $R=0.5$. Remark that the optimal performance is obtained with the code $C_{\alpha} \big |_{\alpha = 0.8}$.} 
\label{fig:BPalphafix}
\end{figure}

\section{Generalization to Any BMSC} \label{sec:perfoBMSC}

This section is devoted to the generalization of the ideas expressed for the BEC in Sections \ref{sec:interp} and \ref{sec:perfoBEC} to the transmission over a BMSC $W$. In particular, first we propose a method for constructing the family of codes ${\mathcal C}_{\rm inter}$ and, then, we analyze the performance for the transmission over a BAWGNC.

\subsection{General Construction of an Interpolating Family}

Suppose that the transmission takes place over the BMSC $W$ and let $Z(W)$ be its Bhattacharyya parameter. In order to construct the interpolating family ${\mathcal C}_{\rm inter}$, we consider the family of channels ${\mathcal W}_{\rm inter}$ ordered by degradation \cite{urbanke:coding} such that the element of the family with the biggest Bhattacharyya parameter is $W$ itself and the element of the family with the smallest Bhattacharyya parameter is the perfect channel $W^{\rm opt}$, in which the output is equal to the input with probability $1$. There are many ways of performing such a task. In particular, we can set
\begin{equation}
{\mathcal W}_{\rm inter} = \{W_{\alpha} : \alpha \in [0, 1]\},
\end{equation}
where $W_{\alpha} = W$ with probability $\alpha$, $W_{\alpha} = W^{\rm opt}$ with probability $1-\alpha$, and the receiver knows which channel has been used. In formulas, $W_{\alpha} = \alpha W + (1-\alpha) W^{\rm opt}$.

Since the convex combination of BMS channels is a BMS channel, $W_{\alpha}$ is also a BMSC with Bhattacharyya parameter $Z_{\alpha} = \alpha Z$. Denote by $C_{\alpha}$ the polar code for transmission over $W_{\alpha}$. Then, the interpolating family ${\mathcal C}_{\rm inter}$ is defined as in \eqref{eq:Cinterp}. This is a reasonable choice for ${\mathcal C}_{\rm inter}$ because of the following result, which extends Proposition \ref{prop:RMP}.

\begin{prop}
Let $W$ be a BMSC, $W^{\rm opt}$ be the perfect channel and $\alpha \in [0, 1]$. Denote by $C_{\alpha}$ the polar code of block length $N$ and rate $R$ designed for transmission over the BMSC $W_{\alpha} = \alpha W + (1-\alpha) W^{\rm opt}$. Then, when $\alpha \to 0$, $C_{\alpha}$ is an RM code.
\end{prop}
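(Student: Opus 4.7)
The plan is to mimic the proof of Proposition \ref{prop:RMP}, replacing its closed-form polynomial expression (available only for the BEC) by matching upper and lower bounds on the synthetic Bhattacharyya parameters valid for an arbitrary BMSC. The central step I would carry out is to show that, for every index $i$,
\begin{equation*}
\bigl(\alpha Z(W)\bigr)^{2^{w_{\rm H}(b^{(i)})}} \;\le\; Z\bigl(W_{N,\alpha}^{(i)}\bigr) \;\le\; A(i)\,\bigl(\alpha Z(W)\bigr)^{2^{w_{\rm H}(b^{(i)})}},
\end{equation*}
where $W_{N,\alpha}^{(i)}$ denotes the $i$-th synthetic channel built from $N$ independent copies of $W_\alpha$ and $A(i)>0$ is a constant depending only on the pattern $b^{(i)}$. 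Once this scaling in $\alpha$ is in hand, the contradiction argument of Proposition \ref{prop:RMP} applies almost verbatim.

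To establish the bounds I would induct on $n$ using the standard one-step identities for the polar transform of any BMSC: $Z(V^+) = Z(V)^2$ and the sandwich $Z(V) \le Z(V^-) \le 2 Z(V) - Z(V)^2 \le 2 Z(V)$. Starting from $z_0 := Z(W_\alpha) = \alpha Z(W)$ and processing the $n$ operations prescribed by $b^{(i)}$, I would track how the exponent of $z_0$ evolves: each plus (an exact squaring) doubles this exponent in both the upper and the lower bound, whereas each minus leaves it untouched and contributes at most a multiplicative factor of $2$ on the upper side (via $Z(V^-) \le 2 Z(V)$) and a factor of $1$ on the lower side (via $Z(V^-) \ge Z(V)$). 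After all $n$ steps the exponent coincides with $2^{w_{\rm H}(b^{(i)})}$ on both sides, and the factors of $2$ accumulated by the minus transforms (occasionally re-squared by later plus transforms) combine into the bookkeeping constant $A(i)$.

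Equipped with these bounds I would conclude by contradiction as in Proposition \ref{prop:RMP}. Assume $C_\alpha$ is not an RM code for some arbitrarily small $\alpha$: then there exist $i^*, j^*$ with $w_{\rm H}(g_{i^*}) > w_{\rm H}(g_{j^*})$, equivalently $w_{\rm H}(b^{(i^*)}) > w_{\rm H}(b^{(j^*)})$ by Proposition~17 of \cite{arikan:polar}, such that $g_{j^*}$ is selected by $C_\alpha$ but $g_{i^*}$ is not. Dividing the two-sided bounds gives
\begin{equation*}
\frac{Z(W_{N,\alpha}^{(i^*)})}{Z(W_{N,\alpha}^{(j^*)})} \;\le\; A(i^*)\,\bigl(\alpha Z(W)\bigr)^{2^{w_{\rm H}(b^{(i^*)})} - 2^{w_{\rm H}(b^{(j^*)})}},
\end{equation*}
and the right-hand side tends to $0$ as $\alpha \to 0$ since its exponent is strictly positive. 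Hence for $\alpha$ below a threshold depending on $(i^*, j^*)$ one has $Z(W_{N,\alpha}^{(i^*)}) < Z(W_{N,\alpha}^{(j^*)})$, contradicting the polar selection rule, which prefers the smaller Bhattacharyya. As there are only finitely many pairs, taking the minimum of the associated thresholds produces a single $\alpha_0>0$ below which $C_\alpha$ must pick rows in decreasing order of $w_{\rm H}(b^{(i)})$ and thus be an RM code.

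The main obstacle is the bookkeeping of the second step: unlike the BEC case, there is no closed-form expression for $Z(W_{N,\alpha}^{(i)})$, so I must lean on both sides of the minus-transform sandwich simultaneously to force the same leading power $\alpha^{2^{w_{\rm H}(b^{(i)})}}$ in the upper and lower bounds. The lower bound $Z(V^-) \ge Z(V)$ plays the role that the explicit minimum-degree monomial in $Z_i(\alpha \varepsilon)$ played in Proposition \ref{prop:RMP}; without it, the ratio in the third step would not be guaranteed to vanish as $\alpha \to 0$ and the whole argument would collapse.
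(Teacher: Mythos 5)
Your proposal is correct and follows essentially the same route as the paper's proof: both replace the exact BEC polynomial by the one-step relations $Z(V^+)=Z(V)^2$ and $Z(V)\le Z(V^-)\le 2Z(V)-Z(V)^2$ from \cite{arikan:polar}, deduce that $Z_{i}$ is sandwiched between (constant multiples of) $(\alpha Z)^{w_{\rm H}(g_i)}$, and then repeat the contradiction argument of Proposition \ref{prop:RMP}. Your write-up merely makes explicit the induction and the bookkeeping constant $A(i)$ that the paper leaves implicit.
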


\begin{proof}
When transmission takes place over the BMSC $W_{\alpha}$, the Bhattacharyya parameter $Z_i(W_{\alpha})$ of the $i$-th synthetic channel $W_{\alpha, N}^{(i)}$ ($i \in \{0, \cdots, N-1\}$) has the form \eqref{eq:Bhatt}, where $\varepsilon$ is replaced by $Z_{\alpha} = \alpha Z$, $f_1(x) = x^2$, and $f_0(x)$ can be bounded as \cite{arikan:polar}
\begin{equation} \label{eq:boundf0}
x \le f_0(x) \le 2x-x^2.
\end{equation}
Suppose that $g_{j^*}$ is included in the generator matrix of the code, but not $g_{i^*}$, with $w_{\rm H}(g_{i^*}) > w_{\rm H}(g_{j^*})$. Then, using \eqref{eq:boundf0}, $Z_{i^*}$ can be upper bounded by a polynomial in $\alpha$ with minimum degree $w_{\rm H}(g_{i^*})$ and $Z_{j^*}$ can be lower bounded by a polynomial in $\alpha$ with minimum degree $w_{\rm H}(g_{j^*})$. Thus, for $\alpha$ small enough $Z_{i^*} < Z_{j^*}$ and we reach a contradiction.
\end{proof}

Remark that if $W=$ BEC$(\varepsilon)$, then $W_{\alpha}=$  BEC$(\alpha\varepsilon)$. In general, there might be more natural ways to obtain the family of codes ${\mathcal C}_{\rm inter}$, according to the particular choice of the channel $W$. Indeed, in Section \ref{subsec:BAWGNC} which deals with the case of the BAWGNC, the interpolating family is constructed in a different way.

Once obtained a family of codes of the form $C_{\alpha}$, where $C_{\alpha}\big |_{\alpha =1}$ is the polar code designed for transmission over the channel $W$ and $C_{\alpha}\big |_{\alpha =0}$ is an RM code, numerical simulations show that the error probability under MAP decoding is an increasing function of $\alpha$. On the other hand, under SC decoding, the optimal performance is still achieved using $C_{\alpha}\big |_{\alpha =1}$. If one considers low-complexity decoding algorithms which get close to the error probability under MAP decoding, the finite-length performance of polar codes is significantly improved by using the code $C_{\alpha}$ for a suitable choice of the parameter $\alpha$.

\subsection{Case Study: \texorpdfstring{$W =$ BAWGNC$(\sigma^2)$}{W = BAWGNC}} \label{subsec:BAWGNC}

Let $W$ be a binary additive white Gaussian noise channel with variance of the noise $\sigma^2$, in short $W=$ BAWGNC$(\sigma^2)$, and define $C_{\alpha}$ as the polar code designed for transmission over $W_{\alpha}=$ BAWGNC$(\alpha \sigma^2)$. As $\alpha \to 0$, $W_{\alpha}$ tends to the perfect channel $W^{\rm opt}$ and $C_{\alpha}$ becomes an RM code. In order to show the performance improvement guaranteed by the usage of codes in the interpolating family ${\mathcal C}_{\rm inter}$ defined as in \eqref{eq:Cinterp}, consider the SCL decoder. To be coherent with the simulation setup of \cite{vardy:listpolar}, the numerical simulations refer to codes of fixed block length $N= 2^{11}$ and rate $R=0.5$. The number of Monte Carlo trials is $M = 10^5$. The codes are optimized for an SNR = 2 dB, namely, $\sigma^2 = 0.6309$ (recall that SNR = $1/\sigma^2$). The results of Figure \ref{fig:BAWGNSCLdiffL} are qualitatively similar to those represented in Figure \ref{fig:SCLdiffL} for the BEC and testify the remarkable performance gain achievable by codes of the form $C_{\alpha}$ with respect to the original polar code $C_{\alpha}\big |_{\alpha = 1}$.

\begin{figure}[t]
\centering
\subfigure[$L=8$]
{\includegraphics[width=0.86\columnwidth]{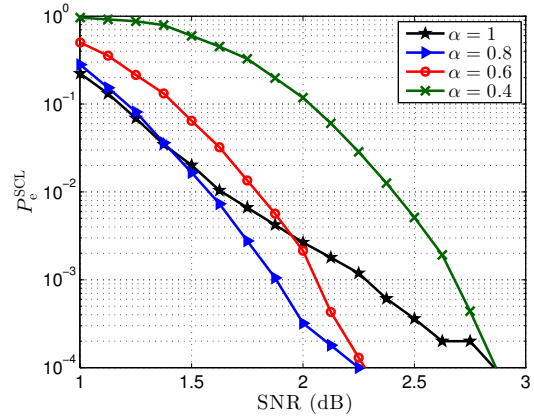}}\\
\subfigure[$L=32$]
{\includegraphics[width=0.86\columnwidth]{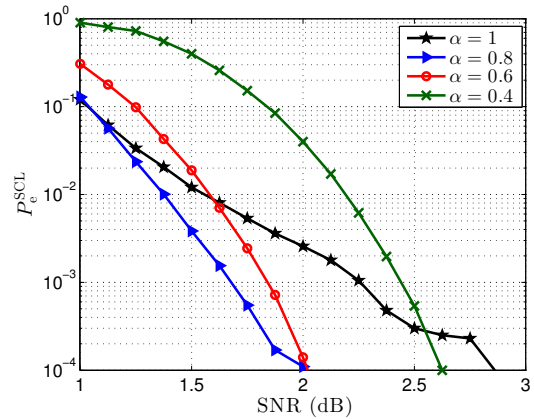}}
\caption{Error probability $P_{\rm e}^{\rm SCL}$ under SCL decoding for the transmission of $C_{\alpha}$ over the BAWGNC$(\sigma^2)$, where $\sigma^2 = 0.6309$, the SNR varies from $1$ to $3$ with a step of $0.125$ and $\alpha \in \{0.4, 0.6, 0.8, 1\}$. The block length is $N=2^{11}$ and the rate is $R=0.5$. For the target error probability $P_{\rm e} = 10^{-3}$ an improvement $\ge 0.5$ dB with respect to the original polar code $C_{\alpha}\big |_{\alpha=1}$ can be noticed using the code $C_{\alpha}\big |_{\alpha=0.8}$ when $L=32$.}
\label{fig:BAWGNSCLdiffL}
\end{figure}

\section{Concluding Remarks} \label{sec:concl}

As pointed out in \cite{vardy:listpolar}, the error probability of polar codes at practical block lengths can be reduced by acting both on the decoder and on the code itself. Unfortunately, an improvement only in the decoding algorithm does not seem to be enough to change the scaling exponent \cite{mondelli:scalingITW}. In this work we address the issue of boosting the finite-length performance of polar codes by modifying jointly the code and the SC decoding algorithm. In particular, we construct a family of codes ${\mathcal C}_{\rm inter} = \{C_{\alpha} : \alpha \in [0, 1]\}$ of fixed block length and rate which interpolates from the original polar code $C_{\alpha}\big |_{\alpha=1}$ to the RM code $C_{\alpha}\big |_{\alpha=0}$. Numerically, the error probability under MAP decoding decreases as $\alpha$ goes from 1 to 0. Since MAP decoding is not practical for transmission over general channels, we develop a trade-off between complexity and performance by considering low-complexity decoders (e.g., BP, SCL). As a result, we show the significant benefit coming from the adoption of codes in ${\mathcal C}_{\rm inter}$ via numerical simulations for the BEC and the BAWGNC. This improvement in the finite-length performance of polar codes can be substantial: we provide experimental evidence of the fact that the error probability under MAP decoding for the transmission over the BEC of $C_{\alpha}$ for $\alpha$ sufficiently small is very close to that of random codes, which achieve a better scaling exponent than polar codes. An interesting open question concerns the extension of the findings of this paper to non-binary channels by constructing polar codes with arbitrary input alphabet sizes \cite{STA09}.

\section*{Acknowledgement}

The authors would like to thank M. B. Parizi for providing the code which simulates the SCL decoder for transmission over the BAWGNC. M. Mondelli is supported by grant No. 200020\_146832/1 of the Swiss National Science Foundation.

\bibliographystyle{IEEEtran}

\bibliography{biblio}

\end{document}